\renewcommand{\thesubfigure}{\thefigure.\arabic{subfigure}}
\renewcommand{\p@subfigure}{}
\renewcommand{\@thesubfigure}{\thesubfigure:\hskip\subfiglabelskip}
\newtheorem{example}{Example}
\newtheorem{definition}{Definition}
\newtheorem{theorem}{Theorem}
\begin{document}

\title[Descriptive Union]{Descriptive Unions.  A Fibre Bundle Characterization of
the Union of\\ Descriptively Near Sets}

\author[M.Z. Ahmad]{M.Z. Ahmad$^{\alpha}$}
\email{ahmadmz@myumanitoba.ca}
\address{\llap{$^{\alpha}$\,}
Computational Intelligence Laboratory,
University of Manitoba, WPG, MB, R3T 5V6, Canada}
\thanks{\llap{$^{\alpha}$\,}The research has been supported by University of Manitoba Graduate Fellowship and Gorden P. Osler Graduate Scholarship.}

\author[J.F. Peters]{J.F. Peters$^{\beta}$}
\email{James.Peters3@umanitoba.ca}
\address{\llap{$^{\beta}$\,}
Computational Intelligence Laboratory,
University of Manitoba, WPG, MB, R3T 5V6, Canada and
Department of Mathematics, Faculty of Arts and Sciences, Ad\.{i}yaman University, 02040 Ad\.{i}yaman, Turkey}
\thanks{\llap{$^{\beta}$\,}The research has been supported by the Natural Sciences \&
Engineering Research Council of Canada (NSERC) discovery grant 185986 
and Instituto Nazionale di Alta Matematica (INdAM) Francesco Severi, Gruppo Nazionale per le Strutture Algebriche, Geometriche e Loro Applicazioni grant 9 920160 000362, n.prot U 2016/000036.}

\subjclass[2010]{Primary 54E05 (Proximity); Secondary 68U05 (Computational Geometry)}

\date{}

\dedicatory{P. Alexandroff and S.A. Naimpally}

\begin{abstract}
This paper introduces an extension of descriptive intersection and provides a framework for descriptive unions of nonempty sets.   Fibre bundles provide structures that characterize spatially near as well as descriptively near sets, their descriptive intersection and their unions.   The properties of four different forms of descriptive unions are given.   A main result given in this paper is the equivalence between ordinary set intersection and a descriptive union.   Applications of descriptive unions are given with respect to Jeffs-Novik convex unions and descriptive unions in digital images.    
\end{abstract}

\maketitle
\section{Introduction}\label{sec:intro}
Descriptively near sets were introduced in \cite{Peters2007AMS,Peters2007FINearSets}, elaborated on in~\cite{DiConcilio2018MCSdescriptiveProximities} and applied in a number of different contexts such as shape classification~\cite{Peters2018AlMSproximalPlanarShapes}~\cite{AhmadPeters2017TAMCSspokes},  vortex nerve structures~\cite{Peters2018JMSMvortexNerves}, strongly proximal Edelsbrunner-Harer nerves~\cite{Peters2016MNC}, quantum entanglement~\cite{PetersTozzi2916quantumEntanglement} and in the foundations of computational proximity~\cite{Peters2016ComputationalProximity}.  The introduction of such sets was motivated by the need to provide a framework for representing both spatially as well as descriptively similar objects. A \emph{description} of an object is a real-valued feature vector that characterizes the object. Digital images are a fertile ground for both spatial and descriptive nearness. For example adjacent pixels are spatially near and pixels with the same intensity values are descriptively near regardless of their spatial location. 
	
	To formulate spatial and descriptive nearness of sets, we require a structure that links a set with its description. A \emph{probe function} is a map $\phi:2^K \rightarrow \mathbb{R}^n$, which assigns a $n-$dimensional vector valued description
	to all subsets of $K$. A \emph{fibre bundle},$(K_\Phi,K,\pi,\phi(U))$, can be used to specify the link between a set and its description. Here $K_\Phi$ is the \emph{glossa} (Latin for glossary) , $K$ is the set, $\pi$ is a continuous surjection and $\phi(U), U\subset K$ is the \emph{fibre}. $K_\phi$ is a set in which each element $k$ of $K$ is paired with its description $\phi(k)$(the \emph{fibre}), hence we call this structure a \emph{glossa}.  This structure can be represented as the following diagram.
	\begin{align}\label{eq:fibre}
	\xymatrix{ \phi(U) \ar[r] & K_\Phi \ar[r]^\pi & K }
	\end{align}
	Moreover, this structure satisfies the \emph{local trivialization} property. Hence, the following diagram commutes.
	\begin{align}\label{eq:localtriv}
	\xymatrix{ K_\Phi \supset \pi^{-1}(U) \ar[r]^\gamma \ar[d]_\pi& U \times \phi(U) \ar[dl]^{{proj}_1}\\
	K \supset U}
	\end{align}
	This means that in a small neighborhood $U \subset K$, $K_\Phi$ is homeomorphic( under map $\gamma$) to $U \times \phi(U)$. Note that $\pi: K_\Phi \rightarrow K$ maps each element in the glossa $K_\Phi$ to the corresponding element in $K$, in this small neighborhood $\pi^{-1}(U) \subset K$. This is due to the fact that a fibre bundle is locally homomorphic to a product space but can vary in structure globally. We also assume that every element of a set has a description i.e. $\phi(k) \neq \emptyset$ for $k \in K$.
\section{Descriptive set theoretic operations}	
In this work, it is important to illustrate the distinction between $K,K_\Phi$ and $\phi(K)$. This distinction is centeral to the proposed framework that unifies the spatial and descriptive aspects of a set. To ellaborate this distinction we present the following example.
\begin{example}
	Suppose we have four balls of three different colors, $B_1,B_2, B_3$ and $B_4$. We can represent these objects as a set $K=\{B_1,B_2,B_3,B_4\}$. Now suppose we have a function,$\phi$ that measures the color of each of these balls. Now colors of the balls can be written as a set $\phi(K)=\{\phi(k): k \in K\}$. In this case $\phi(K)=\{Blue, Black, Blue, Green\}$. The glossa $K_\phi$ is the set in which each of the elements is paired with its description,$K_\Phi=\{\{k,\phi(k)\}:k \in K\}$. In this case $K_\phi=\{\{B_1,Blue\},\{B_2,Black\},\{B_3,Blue\},\{B_4,Green\}\}$. Thus, the objects exist in a broader universe $K_\Phi$, which pairs their spatial aspects $K$ with their respective descriptions $\phi(K)$. \qquad \SquareSteel
\end{example}
 Union and intersection are two important set theoretic operations. Union is defined as
\begin{definition}\label{def:union}
	Let $A,B$ be two sets. Then, 
	\begin{align*}
	A \bigcup B= \{x: x \in A\, or\, x \in B\},
	\end{align*}
	and $\bigcup$ is the union operator.
\end{definition}
Intersection is defined as
\begin{definition}\label{def:intersection}
	Let $A,B$ be two sets. Then, 
	\begin{align*}
	A \bigcap B= \{x: x \in A \, and \, x \in B\},
	\end{align*}
	and $\bigcap$ is the intersection operator.
\end{definition}
Here we will consider the extension of these operations to the framework of sets with descriptions, termed as glossa.
\subsection{Descriptive Intersection}\label{sec:desint}
Notion of descriptive intersection was introduced to study the similarity of sets in terms of their description \cite{peters2013mcs}. It is defined as follows,
\begin{definition}\label{def:desint}
	Let $A,B \subset K$ be two subsets of $K$ and $\phi:2^K \rightarrow \mathbb{R}^n$ be a probe function. Then
	\begin{align*}
	A \mathop \bigcap \limits_\Phi B = \{x\in A \cup B: \phi(x)\in \phi(A)\,and \, \phi(x) \in \phi(B) \}
	\end{align*}
	where $\mathop \bigcap \limits_\Phi$ is the descriptive intersection.
\end{definition}
We can represent this notion in the following diagram,
\begin{align}\label{eq:desint}
\xymatrix{
    A \ar[d]^\phi & A \cup B \ar[l]_a \ar[r]^b &B \ar[d]^\phi \\
    \phi(A) \ar@<1ex>[u]^\pi \ar[r]^c& \phi(A) \cap \phi(B) \ar[d]^\pi& \phi(B) \ar@<1ex>[u]^\pi \ar[l]_d \\
    & A \mathop \bigcap \limits_\Phi B & 
}
\end{align}
Let us study some important properties of $\mathop \bigcap \limits_\Phi$.
\begin{theorem}\label{thm:desint_prop}
Let $A,B \subset K$ be two subsets of a set $K$,$\phi:2^K \rightarrow \mathbb{R}^n$ be the probe function and $\pi: \mathbb{R}^n \rightarrow 2^K$ be a map such that $\pi: x \mapsto \{y \in K: \phi(y)=x\}$. Then, $A \mathop \bigcap \limits_\Phi B$ has following properties:
\begin{compactenum}[$1^0$]
	\item $A \mathop \bigcap \limits_\Phi B = A \mathop \bigcap \limits_\Phi B$.
	\item $A=\emptyset \Rightarrow A \mathop \bigcap \limits_\Phi B= \{x \in B:\phi(x) = \phi(\emptyset) \}$.
	\item $A=B \Rightarrow A \mathop \bigcap \limits_\Phi B = A$.
	\item $A \cap B \Rightarrow A \mathop \bigcap \limits_\Phi B$.
	\item $A \mathop \bigcap \limits_\Phi B \not\Rightarrow A \cap B$.
	\item $(A \mathop \bigcap \limits_\Phi B = A \cap B) \Leftrightarrow \phi$ is an injective function.
	\item $A \mathop \bigcap \limits_\Phi B \subseteq A \cup B$.
	
\end{compactenum}	
\end{theorem}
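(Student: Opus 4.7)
The plan is to work directly from Definition~\ref{def:desint}, unpacking each clause. Most of the seven items are almost syntactic once the defining formula $A\mathop{\bigcap}\limits_{\Phi} B=\{x\in A\cup B:\phi(x)\in\phi(A)\text{ and }\phi(x)\in\phi(B)\}$ is substituted, so I would dispatch (1), (3), (4), (7) in a single short pass and leave the real work for (2), (5), and especially (6).

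For (1), the statement is simply reflexivity of equality, which I would note in one line (noting that, if the intended statement is commutativity, the conjunction in the defining formula is symmetric in $A$ and $B$). For (3), substituting $A=B$ collapses both membership conditions to the single requirement $\phi(x)\in\phi(A)$ for $x\in A$, which holds trivially. For (4), reading the implication as $A\cap B\neq\emptyset\Rightarrow A\mathop{\bigcap}\limits_{\Phi}B\neq\emptyset$, any $x\in A\cap B$ satisfies $\phi(x)\in\phi(A)\cap\phi(B)$, placing $x$ in the descriptive intersection. Item (7) is immediate because the defining set comprehension restricts $x$ to lie in $A\cup B$.

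For (2), with $A=\emptyset$ we have $A\cup B=B$ and $\phi(A)=\phi(\emptyset)$, so the condition $\phi(x)\in\phi(A)$ reduces to $\phi(x)=\phi(\emptyset)$, while $\phi(x)\in\phi(B)$ is automatic for $x\in B$. For (5), I would produce an explicit counterexample: pick $a\neq b$ in $K$ with $\phi(a)=\phi(b)$ and set $A=\{a\}$, $B=\{b\}$, so that $A\cap B=\emptyset$ whereas $A\mathop{\bigcap}\limits_{\Phi}B\supseteq\{a,b\}$.

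The main content is (6), which I would prove by two implications. For the $(\Leftarrow)$ direction, assume $\phi$ is injective; then $\phi(x)\in\phi(A)$ forces $x\in A$ because the unique preimage of $\phi(x)$ is $x$ itself, and likewise for $B$, so $x\in A\mathop{\bigcap}\limits_{\Phi}B$ exactly when $x\in A\cap B$; combined with the containment $A\cap B\subseteq A\mathop{\bigcap}\limits_{\Phi}B$ from item (4) this gives equality. For $(\Rightarrow)$, I would argue by contrapositive: if $\phi$ is not injective, pick $a\neq b$ in $K$ with $\phi(a)=\phi(b)$ and form $A=\{a\}$, $B=\{b\}$ as in (5); then $A\cap B=\emptyset$ while $A\mathop{\bigcap}\limits_{\Phi}B\neq\emptyset$, contradicting equality for these particular subsets. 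The only subtle point here is phrasing: the equality $A\mathop{\bigcap}\limits_{\Phi}B=A\cap B$ must be read as a universal statement over subsets $A,B\subset K$ for the contrapositive to land, which I would make explicit at the start of the argument.
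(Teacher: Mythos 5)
Your proposal is correct, but it takes a more elementary route than the paper. The paper proves every item by substituting into the arrow diagram of eq.~(3) and then applying the identity $A \mathop{\bigcap}\limits_{\Phi} B = \pi\left(\phi(A)\cap\phi(B)\right)$, i.e.\ all the reasoning is routed through the fibre-bundle map $\pi$; you instead unpack the set comprehension in Definition~\ref{def:desint} and argue element-wise, never invoking $\pi$ or the diagrams. The content is parallel item by item, and your version is in places sharper: for $5^0$ you give an explicit counterexample ($A=\{a\}$, $B=\{b\}$ with $a\neq b$, $\phi(a)=\phi(b)$) where the paper only observes that a many-to-one $\phi$ makes such a configuration possible; for $6^0$ the paper's forward direction is a one-line assertion that equality ``forces unique descriptions,'' whereas you run a clean contrapositive reusing the counterexample and, importantly, flag that the forward implication only makes sense if the equality is quantified over all pairs $A,B\subset K$ (for a single fixed pair, e.g.\ $A=B=K$, the implication fails) --- a point the paper leaves implicit. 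For $2^0$ you also sidestep the paper's somewhat delicate discussion of $\phi(\emptyset)\in\phi(B)$ via $\emptyset\in 2^B$ by reducing directly to the condition $\phi(x)=\phi(\emptyset)$ on $x\in B$. What the paper's approach buys is consistency with the rest of the manuscript, where the diagrammatic $\pi\circ\phi$ formulation is the organizing device; what yours buys is rigour and brevity at the level of individual set memberships.
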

\begin{proof}
	\begin{compactenum}[$1^0$]
		\item By interchanging $A$ and $B$ in the arrow diagram illustrated in eq.~\ref{eq:desint}, we get the following diagram:
		\begin{align*}
		\xymatrix{
			B \ar[d]^\phi & B \cup A \ar[l]_b \ar[r]^a &A \ar[d]^\phi \\
			\phi(B) \ar@<1ex>[u]^\pi \ar[r]^d& \phi(B) \cap \phi(A) \ar[d]^\pi& \phi(A) \ar@<1ex>[u]^\pi \ar[l]_c \\
			& B \mathop \bigcap \limits_\Phi A & 
		}
		\end{align*}
		By the def.~\ref{def:union} $A \cup B$ is the set of elements in either $A$ or $B$, and $B \cup A$ is the set of elements in either $B$ or $A$. It is clear that $A \cup B = B \cup A$. Moreover, def.~\ref{def:intersection} $\phi(A) \cap \phi(B)$ is the set of elements common to both $\phi(A)$ and $\phi(B)$. $\phi(B) \cap \phi(A)$ is the set of elements common to both $\phi(B)$ and $\phi(A)$. Hence, $\phi(A) \cap \phi(B) = \phi(B) \cap \phi(A)$. Now, as $\pi(\phi(A) \cap \phi(B)= A \mathop \bigcap \limits_\Phi B$ and $\phi(A)\cap \phi(B)= \phi(B) \cap \phi(A)$, this implies that $\pi(\phi(A)\cap \phi(B))=\pi(\phi(B) \cap \phi(A))$. It is known that $\pi(\phi(B)\cap \phi(A))=B \mathop \bigcap \limits_\Phi A$, thus $A \mathop \bigcap \limits_\Phi B = B \mathop \bigcap \limits_\Phi A$.
		\item By subsituting $A=\emptyset$ in the diagram in eq.~\ref{eq:desint}, we get the following:
		\begin{align*}
		\xymatrix{
			\emptyset \ar[d]^\phi & \emptyset \cup B \ar[l]_a \ar[r]^b &B \ar[d]^\phi \\
			\phi(\emptyset) \ar@<1ex>[u]^\pi \ar[r]^c& \phi(\emptyset) \cap \phi(B) \ar[d]^\pi& \phi(B) \ar@<1ex>[u]^\pi \ar[l]_d \\
			& A \mathop \bigcap \limits_\Phi B & 
		}
		\end{align*}
		The probe function is defined as $\phi:2^K \rightarrow \mathbb{R}^n$. Thus, $\phi(\emptyset)$ is defined even when $\emptyset \not\in B$. $\phi(\emptyset) \cap \phi(B) \neq \emptyset$, as $\emptyset \subset 2^B$ thus, $\phi(\emptyset) \in \phi(B) 
		$. This is due to the fact that $\emptyset$ is by definition included in the power set $2^K$ of every set $K$. It can then be established that $\pi(\phi(\emptyset) \cap \phi(B))=\pi(\phi(\emptyset))=\{x \in B : \phi(x)=\phi(\emptyset)\}$, as per the definition of $\pi$.
	    \item By subsitutting $A=B$ in the arrow diagram of eq.~\ref{eq:desint} we get,
	    \begin{align*}
	    \xymatrix{
	    	A \ar[d]^\phi & A \cup A \ar[l]_a \ar[r]^b &A \ar[d]^\phi \\
	    	\phi(A) \ar@<1ex>[u]^\pi \ar[r]^c& \phi(A) \cap \phi(A) \ar[d]^\pi& \phi(A) \ar@<1ex>[u]^\pi \ar[l]_d \\
	    	& A \mathop \bigcap \limits_\Phi A & 
	    }
	    \end{align*}
		We know from def.~\ref{def:intersection} that $A \cap A =A$, thus $\phi(A) \cap \phi(A)=\phi(A)$. Moreover, $\pi(\phi(A))=A$ as for each $z \in \phi(A)$, $\pi: z \mapsto \{x \in A: \phi(x) = z\}$. 
		\item As we know that $\phi$ is a function, thus by definition it can be one-to-one or many-to-one but not one-to-many. Moreover, $A \cap B$ means that there are elements common to both the sets. Thus due to the definition of a funciton $\phi(A \cap B) \subset \phi(A)$ and $\phi(A \cap B) \subset \phi(B)$ leading to $\phi(A \cap B) \subset \phi(A) \cap \phi(B)$. Hence, $\phi(A) \cap \phi(B) \neq \emptyset$. Thus, $A \cap B$ implies $A \mathop \bigcap \limits _\Phi B$.
		
	 \item As we know that $\phi$ being a function can be either one-to-one or many-to-one but not one to many. Keeping this in mind, and the fact that $A \mathop \bigcap \limits_\Phi B =\pi(\phi(A) \cap \phi(B))$. We can see that $\phi$ is a many-to-one function then it is possible that for $A,B \subset K$, $\phi(a \in A)=\phi(b \in B)$ without $a=b$. Which allows for $a,b \not\in A \cap B$. Thus, $A \mathop \bigcap \limits_\Phi B \not \Rightarrow A \cap B$.
	 
	 \item 
	 \begin{compactenum}
	 	\item[$\Rightarrow:$] We know that $A \mathop \bigcup \limits_{\Phi} B=\pi(\phi(A) \cap \phi(B))$. Thus, for $A \mathop \bigcap \limits_\Phi B=A \cap B$, each element  of $A \cup B$ must have a unique description. Hence, $\phi$ must be a one-to-one or an injective function.
	 	\item[$\Leftarrow:$] $\phi$ being an injection means that each element of a set has a unique description. Thus, only way for an element in $\phi(A)$ to be the same as an element in $\phi(B)$, is to be in both $A$ and $B$. Hence, $A \mathop \bigcap \limits_\Phi B= A \cap B$. 
	 	\end{compactenum}
 	Proving both $(A \mathop \bigcap \limits_\Phi B= A \cap B) \Rightarrow \phi \text{ is an injection}$ and $\phi \text{ is an in injection} \Rightarrow (A \mathop \bigcap \limits_\Phi B= A \cap B)$, leads to the conclusion that bijection,$\Leftrightarrow$, is true.
 	
 	\item This is fact obvious from the def.~\ref{def:desint}. All the elements considered for the membership of $A \mathop \bigcap \limits_\Phi B$ are elements of $A \cup B$. Then, we further narrow our search by choosing the elements such that $\phi(x) \in \phi(A)$ and $\phi(x) \in \phi(B)$.
	\end{compactenum}

\end{proof}
From this we can see that $A \mathop \bigcap \limits_\Phi B$ is generally distinct from both $A \cap B$ and $A \cup B$. Thus, it yields information that is different from both the spatial union and intersection.
\subsection{Descriptive Union}\label{sec:desunion}
Now, let us move onto the idea of a descriptive union. We will define different notions of union based on whether they are restrictive or non-restrictive(spatially), and descriptive discriminatory or nondiscriminatory. First, we define what these terms stand for as follows.
\begin{itemize}
	\item \textbf{restrictive:} all the elements in $A \cup B$ are considered.
	\item \textbf{non-restrictive:} only the elements in $A \cap B$ are considered.
	\item \textbf{descriptive nondiscriminatory:} we consider element with any value of description.
	\item \textbf{descriptive discriminatory:} we consider elements with specific values of description.
\end{itemize}
For sets $A,B \subset K$, the four different types of unions yielded by the above mentioned categorization are listed in table~\ref{table:desun}.
\begin{table}[h!]
	\centering
	\begin{tabular}{|m{1em}m{1em}|c|c|}
		\cline{1-4}
		& & \multicolumn{2}{c|}{\small Spatially} \\ \cline{3-4}
		& & \small restrictive & \small non-restrictive  \\ \cline{1-4}
		\multirow{2}{*}{\rotatebox[origin=c]{90}{\parbox[t][\height][c]{1.8cm}{ \small Descriptive}}}& \multicolumn{1}{|c|}{\rotatebox[origin=c]{90}{\parbox[t]{2.2cm}{\centering \small discriminatory}}} & $A \tilde{\mathop \bigcup \limits_{\phi=\{i,j\}}} B$ & $A \mathop \bigcup \limits_{\phi=\{i,j\}} B$\\
		\cline{2-4}
		& \multicolumn{1}{|c|}{\rotatebox[origin=c]{90}{\parbox[t]{2.6cm}{\centering \small nondiscriminatory}}}& $A \tilde{\mathop \bigcup \limits_\Phi} B$ & $A \mathop \bigcup \limits_\Phi B$  \\
		\hline
	\end{tabular}
\caption{Categorizing different types of descriptive unions}
\label{table:desun}
\end{table}
Now let us explore each type of descriptive union in detail.
\subsubsection{Restrictive and descriptive discriminatory union}\label{sec:def1}
~\\
\begin{definition}\label{def:desun1}
	Let $A \subset K$ be a subset in $K$ and $\phi:2^K \rightarrow \mathbb{R}^n$ be a probe function. Then
	\begin{align*}
	A \tilde{\mathop \bigcup \limits_{\phi=\{i,j\}}} B = \{x \in A \cap B: \phi(x)=i \, or \,\phi(x)=j \}, 
	\end{align*}
	where $\tilde{\mathop \bigcup \limits_{\phi=\{i,j\}}}$ is the spatially restricted and descriptively discriminant union.
\end{definition}
This definition leads to the following diagram.
\begin{align}\label{eq:desun1}
\centering
\xymatrix{
	A \ar[dd]^\phi & & A \cap B \ar[ll]_a \ar[rr]^b & &B \ar[dd]^\phi \\
	& \{i\} \ar[rd]_g & &\{i\} \ar[ld]_k & \\
	\phi(A) \ar@<1ex>[uu]^\pi \ar[ru]^c \ar[rd]^d& & \{i\} \cup \{j\} \ar[dd]^\pi& & \phi(B) \ar@<1ex>[uu]^\pi \ar[lu]_e \ar[ld]_f \\
	& \{j\} \ar[ru]_h &  & \{j\} \ar[lu]_l & \\
	& & A \tilde{\mathop \bigcup \limits_{\phi=\{i,j\}}} B & & 
}
\end{align}
Out of the elements of $A \cap B$, we only consider those with descriptions equal to $i$ or $j$. These values are selected beforehand for this purpose. The following theorem presents important results regarding restrictive and descriptive discriminatory union.
\begin{theorem}
	Let $A,B \subset K$ be subsets in $K$, $\phi:2^K \rightarrow \mathbb{R}^n$ be a probe function and $\pi: \mathbb{R}^n \rightarrow 2^K$ be a map such that $\pi:x \mapsto \{y \in K:\phi(y)=x\}$.Then the following properties are true for the restrictive and descriptive discriminatory union:
	\begin{compactenum}[$1^o$]
		\item $A \tilde{\mathop \bigcup \limits_{\phi=\{i,j\}}} B = B \tilde{\mathop \bigcup \limits_{\phi=\{i,j\}}} A$.
		\item $A=\emptyset \Rightarrow A \tilde{\mathop \bigcup \limits_{\phi=\{i,j\}}} B= \{ \emptyset: \phi(\emptyset)=i \; or \; \phi(\emptyset)=j \}$. 
		\item $A=B \Rightarrow A \tilde{\mathop \bigcup \limits_{\phi=\{i,j\}}} B=\{x \in A: \phi(x)=i \; or \; \phi(x)=j \}$.
		\item $\textbf{Codomain}(\phi|_{A \cap B})=\{i,j\}\Leftrightarrow (A \tilde{\mathop \bigcup \limits_{\phi=\{i,j\}}} B = A \cap B)$.
		\item  $\{i,j\} \not\in \textbf{Codomain}(\phi|_{A \cap B}) \Leftrightarrow A \tilde{\mathop \bigcup \limits_{\phi=\{i,j\}}} B = \emptyset$.
		\item $i \not\in \textbf{Codomain}(\phi|_{A \cap B}) \Leftrightarrow (A \tilde{\mathop \bigcup \limits_{\phi=\{i,j\}}} B = A \tilde{\mathop \bigcup \limits_{\phi=\{j\}}} B)$.
	\end{compactenum}
	where $\phi|_{A \cap B}$ is the restriction of $\phi$ to the domain $A \cap B$.
\end{theorem}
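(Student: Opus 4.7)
My strategy is to handle each of the six properties by directly unwinding Definition~\ref{def:desun1}, mirroring the arrow-diagram style used in the proof of Theorem~\ref{thm:desint_prop}. In every case, the key observation is that $A \tilde{\mathop \bigcup \limits_{\phi=\{i,j\}}} B$ is carved out of $A \cap B$ by a purely pointwise condition on $\phi$, so set-theoretic identities about $\cap$ propagate cleanly, and the descriptive side is handled by elementary logic on the disjunction $\phi(x)=i \text{ or } \phi(x)=j$.

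For item $1^o$, I will appeal to commutativity of ordinary intersection, $A \cap B = B \cap A$, and observe that the selection predicate is symmetric in $A$ and $B$; substituting into the arrow diagram of eq.~\ref{eq:desun1} with $A,B$ swapped yields the same output set. For item $2^o$, I will substitute $A=\emptyset$: spatially the intersection $\emptyset \cap B$ is empty, but following the convention already used in the proof of $2^0$ of Theorem~\ref{thm:desint_prop} (where $\phi(\emptyset)$ is treated as a legitimate description because $\emptyset \in 2^K$), the only candidate ``element'' is $\emptyset$ itself, which survives precisely when $\phi(\emptyset) \in \{i,j\}$. For item $3^o$, substituting $A=B$ reduces $A \cap B$ to $A$, and the descriptive filter is unchanged, giving the claimed set.

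Items $4^o$, $5^o$, $6^o$ are all of the same flavour: each is an iff between a condition on $\textbf{Codomain}(\phi|_{A \cap B})$ and a simplification of the union. I will prove each direction separately. For $4^o$ ($\Rightarrow$): if every $x \in A \cap B$ satisfies $\phi(x) \in \{i,j\}$, then the filtering condition is vacuous and the union equals $A \cap B$; ($\Leftarrow$): if the union equals $A \cap B$, every $x \in A \cap B$ must satisfy $\phi(x) \in \{i,j\}$, forcing the codomain into $\{i,j\}$. For $5^o$, neither $i$ nor $j$ appearing in the codomain means no $x \in A \cap B$ can satisfy $\phi(x)=i$ or $\phi(x)=j$, yielding emptiness, and conversely emptiness of the union (given $A\cap B \neq \emptyset$) forces neither $i$ nor $j$ to be attained. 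For $6^o$, I will argue that removing $i$ from the admissible descriptions, when no element of $A \cap B$ has description $i$ anyway, leaves the selection predicate logically equivalent to $\phi(x)=j$, so the two unions coincide; the reverse implication follows by noting that if any element mapped to $i$, it would appear in the $\{i,j\}$-union but not in the $\{j\}$-union.

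The main delicate point I anticipate is item $2^o$, because the ``set'' $\{\emptyset : \phi(\emptyset)=i \text{ or }\phi(\emptyset)=j\}$ treats $\emptyset$ as a putative element rather than as the absence of elements; I will resolve this by explicitly appealing to the convention established in $2^0$ of Theorem~\ref{thm:desint_prop} that $\phi(\emptyset)$ is a well-defined description because $\emptyset \in 2^K$. The remaining items are routine once each branch of the arrow diagram in eq.~\ref{eq:desun1} is followed through carefully, so I do not expect further obstacles beyond careful bookkeeping of the disjunctive filter.
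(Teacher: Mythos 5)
Your proposal follows essentially the same route as the paper's proof: each item is handled by substituting into the arrow diagram of eq.~\ref{eq:desun1} (or equivalently unwinding Definition~\ref{def:desun1}), using commutativity of $\cap$ for $1^o$, the convention that $\phi(\emptyset)$ is defined because $\emptyset \in 2^K$ for $2^o$, and elementary reasoning about the disjunctive filter for $3^o$--$6^o$. If anything, you are slightly more thorough than the paper on items $4^o$ and $5^o$, where the paper argues only one direction of the biconditional explicitly while you spell out both.
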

\begin{proof}
	\begin{compactenum}[$1^o$]
		\item By subsituting $A$ with $B$ and vice versa in eq.~\ref{eq:desun1} we get the following diagram:
			\begin{align*}
			\centering
			\xymatrix{
				B \ar[dd]^\phi & & B \cap A \ar[ll]_a \ar[rr]^b & &A \ar[dd]^\phi \\
				& \{i\} \ar[rd]_k & &\{i\} \ar[ld]_g & \\
				\phi(B) \ar@<1ex>[uu]^\pi \ar[ru]^e \ar[rd]^f& & \{i\} \cup \{j\} \ar[dd]^\pi& & \phi(A) \ar@<1ex>[uu]^\pi \ar[lu]_c \ar[ld]_d \\
				& \{j\} \ar[ru]_l &  & \{j\} \ar[lu]_h & \\
				& & B \tilde{\mathop \bigcup \limits_{\phi=\{i,j\}}} A & & 
			}
			\end{align*}
		 We know that elements in $B$ and $A$ are the same as the elements in $A$ and $B$, hence $A \cap B=B \cap A$. The rest of the diagram is identical to eq.~\ref{eq:desun1}. From this we can conclude that $A \tilde{\mathop \bigcup \limits_{\phi=\{i,j\}}} B$ equals $\pi(\{i\}\cup \{j\})$ which equals $B \tilde{\mathop \bigcup \limits_{\phi=\{i,j\}}} A$. Hence $A \tilde{\mathop \bigcup \limits_{\phi=\{i,j\}}} B=B \tilde{\mathop \bigcup \limits_{\phi=\{i,j\}}} A$.	
		\item Let us substitute $A= \emptyset$ in eq.~\ref{eq:desun_2}, to get the following diagram:
			\begin{align*}
			\centering
			\xymatrix{
				\emptyset \ar[dd]^\phi & & \emptyset \cap B \ar[ll]_a \ar[rr]^b & &B \ar[dd]^\phi \\
				& \{i\} \ar[rd]_g & &\{i\} \ar[ld]_k & \\
				\phi(\emptyset) \ar@<1ex>[uu]^\pi \ar[ru]^c \ar[rd]^d& & \{i\} \cup \{j\} \ar[dd]^\pi& & \phi(B) \ar@<1ex>[uu]^\pi \ar[lu]_e \ar[ld]_f \\
				& \{j\} \ar[ru]_h &  & \{j\} \ar[lu]_l & \\
				& & \emptyset \tilde{\mathop \bigcup \limits_{\phi=\{i,j\}}} B & & 
			}
			\end{align*}
			We can observe from this diagram that $\emptyset \tilde{\mathop \bigcup \limits_{\phi=\{i,j\}}} A $ is embedded in $\emptyset \cap A=\emptyset$. But, as we know that $\phi:2^K \rightarrow \mathbb{R}^n$, thus we have a description for $\emptyset$ even if it is not an element of the sets in question. This is because $\emptyset \in 2^K$, as empty set is a subset of every set. From the definition of $\pi(\{i\}\cup\{j\})$ will equal $\{\emptyset\}$ if $\phi(\emptyset)=i$ or $\phi(\emptyset)=j$. We are using the notion that $X=\emptyset$ and $X=\{\emptyset\}$ are different with the former being an empty set while the later having one element which is the empty set itself. 
		\item By substituting $B=A$ in eq.~\ref{eq:desun_2} we get the following diagram:
			\begin{align*}
			\centering
			\xymatrix{
				A \ar[dd]^\phi & & A \cap A \ar[ll]_a \ar[rr]^b & &A \ar[dd]^\phi \\
				& \{i\} \ar[rd]_g & &\{i\} \ar[ld]_k & \\
				\phi(A) \ar@<1ex>[uu]^\pi \ar[ru]^c \ar[rd]^d& & \{i\} \cup \{j\} \ar[dd]^\pi& & \phi(A) \ar@<1ex>[uu]^\pi \ar[lu]_e \ar[ld]_f \\
				& \{j\} \ar[ru]_h &  & \{j\} \ar[lu]_l & \\
				& & A \tilde{\mathop \bigcup \limits_{\phi=\{i,j\}}} A & & 
			}
			\end{align*}
			It can be observed that $A \tilde{\mathop \bigcup \limits_{\phi=\{i,j\}}} A$ is embedded in $A \cap A=A$. Using the definition of $\pi$, and def.~\ref{def:desun1} we can see that we get the elements of $A \cap A=A$ that have a description of $i$ or $j$.
		\item \textbf{Codomain}$(\phi|_{A \cap B})=\{i,j\}$ means that all the elements in $A \cap B$ have a description of either $i$ or $j$. This reduced def.~\ref{def:desun1} to $A \cap B$ as every element in this region satisfies the conditions on description.
		\item $\{i,j\} \not\in \textbf{Codomain}(\phi|_{A\cap B})$ means that no element in $A \cap B$ have a description of $i$ or $j$. Thus, def.~\ref{def:desun1} dictates that $A \tilde{\mathop \bigcup \limits_{\phi=\{i,j\}}} B= \emptyset$.
		\item 
			\begin{compactenum}
				\item[$\Rightarrow:$] $i \not\in \textbf( Codomain)(\phi|_{A \cap B})$ means that no element in $A \cap B$ has the deescription $i$. Under this assumption we can see that def.~\ref{def:desun1} will reduce to the definition of $A \tilde{\mathop \bigcup \limits_{\phi=\{j\}}} B$. Because all the elements having a description of $i$ or $j$ are the same as all the elements having the description $j$. This is due to the fact that no element has the description $i$.
				
				\item[$\Leftarrow:$] $A \tilde{\mathop \bigcup \limits_{\phi=\{i,j\}}} B=A \tilde{\mathop \bigcup \limits_{\phi=\{j\}}} B$, means that the elements in $A \cap B$ having a description $i$ or $j$ are the same as the elements having description $j$. This can only happen if none of the elements in $A \cap B$ have the description $i$. Hence, $i \in \textbf{ Codomain}(\phi|_{A \cap B})$.				
			\end{compactenum}
		proving both $\Rightarrow$ and $\Leftarrow$ leads to the conclusion that $\Leftrightarrow$ holds.
				
	\end{compactenum}
\end{proof}
\subsubsection{Non-restrictive and descriptive discriminatory union}\label{sec:def2}
~\\
Consider taking into account, all the unique elements of sets $A$ and $B$. Then, select those elements which have either of the two values of description selected apriori. This particular notion of a union is termed, spatially inrestricted as it starts from $A \cup B$ and descriptively discriminant, as it then selects elements having either of the two descriptions chosen beforehand. The resulting definition is   
\begin{definition}\label{def:desun2}
	Let $A,B \subset K$ be subsets in $K$ and $\phi:2^K \rightarrow \mathbb{R}^n$ be a probe function. Then
	\begin{align*}
	A \mathop \bigcup \limits_{\phi=\{i,j\}} B = \{x \in A \cup B: \phi(x)=i \, or \,\phi(x)=j\}, 
	\end{align*}
	where $\mathop \bigcup \limits_{\phi=\{i,j\}}$ is the spatially unrestricted and descriptively discriminant union.
\end{definition}
From the definition we can construct the following .
\begin{align}\label{eq:desun_2}
\centering
\xymatrix{
	A \ar[dd]^\phi & & A \cup B \ar[ll]_a \ar[rr]^b & &B \ar[dd]^\phi \\
	& \{i\} \ar[rd]_g & &\{i\} \ar[ld]_k & \\
	\phi(A) \ar@<1ex>[uu]^\pi \ar[ru]^c \ar[rd]^d& & \{i\} \cup \{j\} \ar[dd]^\pi& & \phi(B) \ar@<1ex>[uu]^\pi \ar[lu]_e \ar[ld]_f \\
	& \{j\}\ar[ru]_h &  &\{j\} \ar[lu]_l & \\
	& & A \mathop \bigcup \limits_{\phi=\{i,j\}} B & & 
}
\end{align}
Where $\{i\},\{j\}$ are values of description, chosen before hand for analysis. We only select the elements that have either of these descriptions. 

We formulate important results regarding non-restrictive and descriptive discriminatory union in the following theorem. 
\begin{theorem}
	Let $A,B \subset K$ be subsets in $K$, $\phi:2^K \rightarrow \mathbb{R}^n$ be a probe function and $\pi: \mathbb{R}^n \rightarrow 2^K$ be a map such that $\pi:x \mapsto \{y \in K:\phi(y)=x\}$.Then the following properties are true for the non-restrictive and descriptive discriminatory union:
	\begin{compactenum}[$1^o$]
		\item $A \mathop \bigcup \limits_{\phi=\{i,j\}} B = B \mathop \bigcup \limits_{\phi=\{i,j\}} A$.
		\item $A=\emptyset \Rightarrow A \mathop \bigcup \limits_{\phi=\{i,j\}} B= \{x \in B: \phi(x)=i \, or \, \phi(x)=j \}$. 
		\item $A=B \Rightarrow A \mathop \bigcup \limits_{\phi=\{i,j\}} B=\{x \in A: \phi(x)=i \, or \, \phi(x)=j \}$.
		\item $\textbf{Codomain}(\phi|_{A \cup B})=\{i,j\}\Leftrightarrow (A \mathop \bigcup \limits_{\phi=\{i,j\}} B = A \cup B)$.
		\item $\forall A,B \text{ nonempty subsets of K }\,and\, \{i,j\} \not\in \textbf{Codomain}(\phi|_{A \cup B}) \Leftrightarrow A \mathop \bigcup \limits_{\phi=\{i,j\}} B = \emptyset$.
		\item $i \not\in \textbf{Codomain}(\phi|_{A \cup B}) \Leftrightarrow (A \mathop \bigcup \limits_{\phi=\{i,j\}} B = A \mathop \bigcup \limits_{\phi=\{j\}} B)$.
		\item $\phi \text{ is an injective function and } \phi|_{A \cap B}=\{i,j\} \Leftrightarrow (A \mathop \bigcup \limits_{\phi=\{i,j\}} B = A \cap B)$.
	\end{compactenum}
where $\phi|_{A \cap B}$ is the restriction of $\phi$ to the domain $A \cap B$.
\end{theorem}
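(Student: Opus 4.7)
The plan is to follow the same pattern as the preceding theorem: for each of the seven claims, substitute into the arrow diagram of eq.~\ref{eq:desun_2} and then reduce to the set-builder form given in def.~\ref{def:desun2}. Most of the items are unpacked directly from the definition; the interesting work is in items $4^o$ through $7^o$, which characterise when the descriptive union collapses to something purely spatial.

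For items $1^o$--$3^o$ I would proceed by inspection. Commutativity ($1^o$) follows by swapping $A$ and $B$ in eq.~\ref{eq:desun_2} and noting that $A\cup B = B\cup A$ and $\phi(A)\cup\phi(B) = \phi(B)\cup\phi(A)$, while the preselected values $\{i,j\}$ are unchanged. For $2^o$, substituting $A=\emptyset$ gives $A\cup B = B$, so $\{x \in A\cup B : \phi(x)=i \text{ or } \phi(x)=j\}$ immediately reduces to $\{x\in B:\phi(x)=i\text{ or }\phi(x)=j\}$. For $3^o$, substituting $B=A$ gives $A\cup B=A$, and the same reduction applies. In each case the corresponding specialised diagram accompanies the verbal argument, exactly as in the proof of the preceding theorem.

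For items $4^o$--$6^o$ the biconditionals follow from unfolding def.~\ref{def:desun2} in each direction. In $4^o$, $(\Rightarrow)$ uses that if every description in $\phi(A\cup B)$ lies in $\{i,j\}$, then the selection clause in def.~\ref{def:desun2} is automatically satisfied by every element of $A\cup B$; conversely, if $A\cup B$ is entirely retained, then every element of $A\cup B$ must satisfy $\phi(x)\in\{i,j\}$, so the codomain of $\phi|_{A\cup B}$ is contained in (and, since the union is nonempty on at least one side by hypothesis, equal to) $\{i,j\}$. Item $5^o$ is the contrapositive style of the same reasoning: if neither value $i$ nor $j$ is attained on $A\cup B$, then no element of $A\cup B$ passes the filter, so the result is empty; the nonemptiness assumption on $A,B$ rules out the degenerate empty-domain case. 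For $6^o$, if $i$ is never attained on $A\cup B$, then the disjunction $\phi(x)=i \text{ or } \phi(x)=j$ is logically equivalent to $\phi(x)=j$ on the domain $A\cup B$, which gives the equality of the two unions; the converse again follows by comparing the set-builder clauses.

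The main obstacle is item $7^o$, since it mixes a global property of $\phi$ with a local one. For $(\Rightarrow)$ I would argue as follows: injectivity of $\phi$ forces any $x\in A\cup B$ with $\phi(x)\in\{i,j\}$ to be unique for each value, and the hypothesis $\phi|_{A\cap B}=\{i,j\}$ exhibits two elements of $A\cap B$ realising those descriptions, so by injectivity they are the \emph{only} preimages of $i$ and $j$ in $A\cup B$; hence the selection in def.~\ref{def:desun2} picks out precisely $A\cap B$. For $(\Leftarrow)$, assuming $A\bigcup_{\phi=\{i,j\}}B = A\cap B$, every element of $A\cap B$ must satisfy $\phi(x)\in\{i,j\}$ and both values must be realised on $A\cap B$ (otherwise the union would be strictly smaller than $A\cap B$ or strictly larger by picking up an element of $(A\cup B)\setminus(A\cap B)$ with description in $\{i,j\}$); this gives $\phi|_{A\cap B}=\{i,j\}$. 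Injectivity then follows because if some $y\in(A\cup B)\setminus(A\cap B)$ had $\phi(y)\in\{i,j\}$, it would force $y\in A\bigcup_{\phi=\{i,j\}}B\setminus(A\cap B)$, contradicting the equality; extending this observation to all of $K$ via the hypothesis on $A,B$ yields injectivity of $\phi$ on the relevant locus (a subtlety I would flag explicitly, since strictly global injectivity on $K$ is a slightly stronger conclusion than what the argument naturally produces).
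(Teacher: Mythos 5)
Your proposal is correct and follows essentially the same route as the paper's own proof: substitution into the arrow diagram of eq.~\ref{eq:desun_2} for $1^o$--$3^o$, direct unfolding of Def.~\ref{def:desun2} for the biconditionals $4^o$--$6^o$, and the injectivity-plus-restriction argument for $7^o$. The one place you go beyond the paper is in flagging that the converse of $7^o$ only naturally yields injectivity of $\phi$ on $A\cup B$ rather than on all of $K$; the paper's proof glosses over this, so your caveat is a fair and slightly more careful reading of the same argument.
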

\begin{proof}
	\begin{compactenum}[$1^o$]
		\item Interchanging $A$ and $B$ in the diagram illustrated in eq.~\ref{eq:desun_2}, we get the following
			\begin{align*}
			\centering
			\xymatrix{
				B \ar[dd]^\phi & & B \cup A \ar[ll]_b \ar[rr]^a & &A \ar[dd]^\phi \\
				& \{i\} \ar[rd]_k & &\{i\} \ar[ld]_g & \\
				\phi(B) \ar@<1ex>[uu]^\pi \ar[ru]^e \ar[rd]^f& & \{i\} \cup \{j\}  \ar[dd]^\pi& & \phi(A) \ar@<1ex>[uu]^\pi \ar[lu]_c \ar[ld]_d \\
				& \{j\} \ar[ru]_l &  & \{j\} \ar[lu]_h & \\
				& & B \mathop \bigcup \limits_{\phi=\{i,j\}} A & & 
			}
			\end{align*}
			Comparing this to eq.~\ref{eq:desun_2}, we observe that $B \cup A$ is the same as $A \cup B$. This is because $x \in A$ or $x \in B$ is equivalent to $x \in B$ or $x \in A$. The rest of the diagram is same as that in diagram~\ref{eq:desun_2}. Thus, $A \mathop \bigcup \limits_{\phi=\{i,j\}} B= \pi(\{i\}\cup \{j\})$ is equal to $\pi(\{i\} \cup \{j\})= B \mathop \bigcup \limits_{\phi=\{i,j\}} A$.
		
		\item By setting $A=\emptyset$ in the diagram \ref{eq:desun_2} we get,
			\begin{align*}
			\centering
			\xymatrix{
				\emptyset \ar[dd]^\phi & & \emptyset \cup B \ar[ll]_a \ar[rr]^b & &B \ar[dd]^\phi \\
				& \{i\} \ar[rd]_g & &\{i\} \ar[ld]_k & \\
				\phi(\emptyset) \ar@<1ex>[uu]^\pi \ar[ru]^c \ar[rd]^d& & \{i\}\cup \{j\} \ar[dd]^\pi& & \phi(B) \ar@<1ex>[uu]^\pi \ar[lu]_e \ar[ld]_f \\
				& \{j\} \ar[ru]_h &  & \{j\} \ar[lu]_l & \\
				& & \emptyset \mathop \bigcup \limits_{\phi=\{i,j\}} B & & 
			}
			\end{align*}
			As the diagram above shows that $\emptyset \mathop \bigcup \limits_{\phi=\{i,j\}} B$ is embedded in $\emptyset \cup B$. We know that $\emptyset \cup B=B$ and by the definition of $\pi$, we can establish that $ \emptyset \mathop \bigcup \limits_{\phi=\{i,j\}} B$ is equal to $\{x \in B: \phi(x)=i \, or \, \phi(x)=j\}$.
		
		\item By setting $A=B$ in diagram \ref{eq:desun_2}, we get 
		\begin{align*}
		\centering
		\xymatrix{
			A \ar[dd]^\phi & & A \cup A \ar[ll]_a \ar[rr]^b & &B \ar[dd]^\phi \\
			& \{i\} \ar[rd]_g & &\{i\} \ar[ld]_g & \\
			\phi(A) \ar@<1ex>[uu]^\pi \ar[ru]^c \ar[rd]^d& & \{i\}\cup \{j\} \ar[dd]^\pi& & \phi(A) \ar@<1ex>[uu]^\pi \ar[lu]_c \ar[ld]_d \\
			& \{j\} \ar[ru]_h &  & \{j\} \ar[lu]_h & \\
			& & A \mathop \bigcup \limits_{\phi=\{i,j\}} A& & 
		}
		\end{align*}
		This diagram illustrates that $A \mathop \bigcup \limits_{\phi=\{i,j\}} A $ is embedded in $A \cup A=A$. By the definition of $\pi$, we can establish that $A \mathop \bigcup \limits_{\phi=\{i,j\}} A$ is $\{x\in A: \phi(x)=i \, or \, \phi(x)=j\}$.
		
	\item 
		\begin{compactenum}
			\item[$\Rightarrow:$] \textbf{Codomain}$(\phi|_{A \cup B})=\{i,j\}$ implies that all the elements in $A,B$, have a description of either $i$ or $j$. Thus, the def.~\ref{def:desun2} is reduced to $A \cup B$, since every element in $A \cup B$ has a description $i$ or $j$.
			\item[$\Leftarrow:$] Def.~\ref{def:desun2} implies that if $A \mathop \bigcup \limits_{\phi=\{i,j\}} B=A \cup B$, then the description of every element is either $i$ or $j$. Which means that \textbf{Codomain}$(\phi|_{A \cup B})\{i,j\}$. 
		\end{compactenum}
	Proving both $\Rightarrow$ and $\Leftarrow$ portions of the given statement implies that the bijection, $\Leftrightarrow$, also holds.
	\item 
	\begin{compactenum}
		\item[$\Rightarrow:$] $\{i,j\} \not\in \textbf{ Codomain}(\phi|_{A \cup B})$  means that none of the elements in $A$ or $B$ has the description of either $i$ or $j$. Thus the def.~\ref{def:desun2} yields that $A \mathop \bigcup \limits_{\phi=\{i,j\}} B=\emptyset$.
		\item[$\Leftarrow:$] As both the subsets $A,B \subset K$ are nonempty, $A \mathop \bigcup \limits_{\phi=\{i,j\}} B$ implies that none of the elements of $A$ or $B$ have the $i$ or $j$ as their description. Hence, $\{i,j\} \not\in \textbf{Codomain}(\phi|_{A \cup B})$.
	\end{compactenum}
Proving both $\Rightarrow$ and $\Leftarrow$ leads to the conclusion that $\Leftrightarrow$, or the bijection also holds.
	\item 
	\begin{compactenum}
		\item[$\Rightarrow:$] $i \not\in \textbf{ Codomain}(\phi|_{A \cup B})$ means that none of the elements in the sets $A$ or $B$ has the description $i$. This implies that all the elements of $A \cup B$ having the description of $i$ or $j$ are the same as those elements of $A \cup B$ having the description $j$.
		\item[$\Leftarrow:$] We know that $\phi$ is a function hence it can be either many-to-one or one-to-one, i.e. multiple elements can have the same description but it is not possible for an element to have multiple descriptions. $A \mathop \bigcup \limits_{\phi=\{i,j\}} B=A \mathop \bigcup \limits_{\phi=\{j\}} B$, i.e. elements of $A \cup B$ having the descriptions of $i$ or $j$ are same as the elements of $A \cup B$ having the description of $j$, if there are no elements having the description $i$. This means that $i \not\in \textbf{ Codomain}(\phi|_{A \cup B})$.
	\end{compactenum}
Proving both the $\Rightarrow$ and $\Leftarrow$ establishes that $\Leftrightarrow$ holds.
	\item 
		\begin{compactenum}
			\item[$\Rightarrow:$] $\phi$ being an injective function forces each element to have a unique description. Combining this with the fact that when restricted to $A \cap B,\, \phi=\{i,j\}$, it can be observed that it is not possible for any element outside of $A \cap B$ to be either $i$ or $j$. From def.~\ref{def:desun2}, this leads to $A \mathop \bigcup \limits_{\phi=\{i,j\}} B=A \cap B$.
			
			\item[$\Leftarrow:$] $A \mathop \bigcup \limits_{\phi=\{i,j\}} B=A \cap B$ requires that the elements in $A \cap B$ have description of either $i$ or $j$. This leads to the condition that $\phi|_{A \cap B}$. The additional requirement forced by the antecedant is that no element outside $A \cap B$ have the same description as that of elements inside. This calls for the map $\phi$ to preserve the distinctness of elements. Such a fucntion is called an injective function. 
		\end{compactenum}
	Hence, proving both $\Rightarrow$ and $\Leftarrow$ results in establishing the $\Leftrightarrow$.		
    \end{compactenum}
\end{proof}
\subsubsection{Restrictive and descriptive nondiscriminatory union}\label{sec:def3}
~\\
Another possible definition is to consider a dual of descriptive intersection defined in  Def.~\ref{def:desint} by replacing $\cup$ with $
\cap$. The resulting notion of union is restrictive, since it only considers the elements common to both sets $A$ and $B$. Moreover, we donot choose elements with specific descriptions, hence this notion is descriptive nondiscriminatory.
The resulting definition is as follows:
\begin{definition}\label{def:desun3}
	Let $A,B \subset K$ be subsets of $K$ and $\phi:2^K \rightarrow \mathbb{R}^n$. Then,
	\begin{align*}
	A \tilde{\mathop \bigcup \limits_\Phi} B=\{x \in A \cap B: \phi(s) \in \phi(A)\, or \, \phi(x) \in \phi(B)\},
	\end{align*}
	where $\tilde{\mathop \bigcup \limits_\Phi} $ is the spatially restricted and descriptively indiscriminant union.
\end{definition}
We can represent this definition as the following diagram.
\begin{align}
\xymatrix{
	A \ar[d]^\phi & A \cap B \ar[l]_a \ar[r]^b &B \ar[d]^\phi \\
	\phi(A) \ar@<1ex>[u]^\pi \ar[r]^c& \phi(A) \cup \phi(B) \ar[d]^\pi& \phi(B) \ar@<1ex>[u]^\pi \ar[l]_d \\
	& A \tilde{\mathop \bigcup \limits_\Phi} B & 
}
\end{align}
An important result regarding the spatially restricted and discriptively indiscriminant union is stated as follows.
\begin{theorem}\label{thm:equiv_inter}
	Let $A,B \subset K$ be the two sets in $K$ and $\phi:2^K \rightarrow \mathbb{R}^n$ be a probe function. Then,
	\begin{align*}
	  A \tilde{\mathop \bigcup \limits_\Phi} B \Leftrightarrow A \bigcap B .
	\end{align*}
\end{theorem}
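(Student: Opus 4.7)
The statement asserts set-equality (the $\Leftrightarrow$ here is used in the sense of mutual set-inclusion, following the style of Theorem~\ref{thm:desint_prop} property~$6^0$). The plan is to prove the two inclusions separately, showing that the descriptive condition in Definition~\ref{def:desun3} is automatically satisfied as soon as the spatial condition $x \in A \cap B$ holds, so that no additional filtering actually occurs.

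For the forward inclusion $A \tilde{\mathop \bigcup \limits_\Phi} B \subseteq A \cap B$, I would just unpack Definition~\ref{def:desun3}: any $x \in A \tilde{\mathop \bigcup \limits_\Phi} B$ is by construction an element of $A \cap B$, so the spatial restriction immediately places $x$ in $A \cap B$. This direction is essentially free from the definition.

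For the reverse inclusion $A \cap B \subseteq A \tilde{\mathop \bigcup \limits_\Phi} B$, I would take an arbitrary $x \in A \cap B$. Since $x \in A$ and $\phi$ is a function defined on elements (equivalently on singletons in $2^K$), it follows that $\phi(x) \in \phi(A)$. This alone satisfies the disjunction ``$\phi(x) \in \phi(A)$ or $\phi(x) \in \phi(B)$'' required by Definition~\ref{def:desun3}. Combined with $x \in A \cap B$, this places $x$ in $A \tilde{\mathop \bigcup \limits_\Phi} B$. (Symmetrically, $x \in B$ also gives $\phi(x) \in \phi(B)$, but a single disjunct suffices.)

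There is essentially no obstacle: the whole content of the theorem is the observation that the ``or'' in the descriptive clause is redundant once the spatial clause forces $x$ to lie in both $A$ and $B$. I would make this redundancy explicit in the write-up, perhaps by noting that it is exactly the combination of spatial restriction (keeping only $A \cap B$) with descriptive \emph{non}-discrimination (admitting any description that appears in either set) that collapses the construction to ordinary intersection; replacing either of these two choices (as in Definitions~\ref{def:desun1} and~\ref{def:desun2}) destroys the equivalence. The corresponding arrow diagram from the definition can be cited to visualize that the bottom vertex $A \tilde{\mathop \bigcup \limits_\Phi} B$ is reached from $A \cap B$ by $\pi$ applied to $\phi(A) \cup \phi(B)$, and since every $x \in A \cap B$ has $\phi(x)$ in this union, the map is surjective onto $A \cap B$ with no further shrinkage.
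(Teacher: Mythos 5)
Your proposal is correct and takes essentially the same route as the paper's own proof: both rest on the single observation that for any $x \in A \cap B$ the descriptive disjunction in Definition~\ref{def:desun3} is automatically satisfied (the paper notes $\phi(x)$ lies in both $\phi(A)$ and $\phi(B)$ and hence in at least one; you note that one disjunct already suffices), so the definition collapses to $A \cap B$. Your explicit two-inclusion organization merely makes visible the trivial forward containment that the paper leaves implicit.
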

\begin{proof}
	It can be seen that every $x \in A \cap B$ satisfies the condition that $\phi(x) \in \phi(A) \, and \, \phi(x) \in  \phi(B)$. Thus it  satisfies the condition that $\phi(x) \in \phi(A) \, or \, \phi(x) \in  \phi(B)$. Def.~\ref{def:desun3} reduces to $A \cap B$, hence proved.
\end{proof}
An other statement of the above result is that the description of every element in $A \cap B$ is either included in $\phi(A)$ or $\phi(B)$. Spatially restricted and descriptively indiscriminant union is equivalent to the usual interesction of sets ignoring the description of constituent elements.  

\subsubsection{Non-restrictive and descriptive nondiscriminatory union}\label{sec:def4}
~\\
We consider a definition of the descriptive union that considers all the unique elements of sets $A$ and $B$, i.e. $A \cup B$. Moreover, we do not filter the elements chosen based on their description and hence the notion of being descriptive nondiscriminatory. With this in mind we propose the following definition, 
\begin{definition}\label{def:desun4}
	Let $A,B \subset K$ be two subsets of $K$ and $\phi:2^K \rightarrow \mathbb{R}^n$ be the probe function. Then \begin{align*}
	A \mathop \bigcup \limits_\Phi B=\{x\in A \cup B: \phi(x)\in \phi(A) \, or \, \phi(x) \in \phi(B) \}
	\end{align*}
	where $\mathop \bigcup \limits_\Phi$ is the spatially unrestricted and descriptively indiscriminant union.
\end{definition}
This can be illustrated using the following diagram.
\begin{align}
\xymatrix{
	A \ar[d]^\phi & A \cup B \ar[l]_a \ar[r]^b &B \ar[d]^\phi \\
	\phi(A) \ar@<1ex>[u]^\pi \ar[r]^c& \phi(A) \cup \phi(B) \ar[d]^\pi& \phi(B) \ar@<1ex>[u]^\pi \ar[l]_d \\
	& A \mathop \bigcup \limits_\Phi B & 
}
\end{align}
An important result regarding the spatially unrestricted and descriptively indiscriminant union is presented in the following theorem.
\begin{theorem}\label{thm:equiv_union}
	Let $A,B \subset K$ be the two sets in $K$ and $\phi:2^K \rightarrow \mathbb{R}^n$ be a probe function. Then,
	\begin{align*}
	A \mathop \bigcup \limits_\Phi B \Leftrightarrow A \bigcup B.
	\end{align*}
\end{theorem}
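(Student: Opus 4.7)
The plan is to mirror the strategy used in Theorem~\ref{thm:equiv_inter}: show that the descriptive filter built into Definition~\ref{def:desun4} is automatically satisfied by every element of $A\cup B$, so that the filter imposes no actual restriction and the descriptive union collapses to the ordinary union. Since the claimed $\Leftrightarrow$ is, as in the preceding theorem, really an assertion of set equality, I would prove it by establishing two inclusions.

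First I would unpack the defining condition: $x \in A \mathop \bigcup \limits_\Phi B$ exactly when $x \in A \cup B$ and, in addition, $\phi(x) \in \phi(A)$ or $\phi(x) \in \phi(B)$. The containment $A \mathop \bigcup \limits_\Phi B \subseteq A \cup B$ is then immediate from Definition~\ref{def:desun4}, since every candidate element is drawn from $A \cup B$ to begin with.

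For the reverse inclusion, I would fix an arbitrary $x \in A \cup B$ and argue by cases. If $x \in A$, then by the very definition of the image set, $\phi(x) \in \phi(A) = \{\phi(a) : a \in A\}$, which already satisfies the disjunctive descriptive condition; symmetrically, if $x \in B$ then $\phi(x) \in \phi(B)$. In either case $x$ meets the membership requirement for $A \mathop \bigcup \limits_\Phi B$, so $A \cup B \subseteq A \mathop \bigcup \limits_\Phi B$. Combining the two inclusions yields the stated equivalence.

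There is no serious obstacle here; the only conceptual point worth flagging, which I would mention briefly after the formal argument, is that this collapse is driven entirely by the logical disjunction in the descriptive condition: an element always matches its own description on at least one of the sides to which it belongs, so the filter can never exclude anyone. This is precisely the feature that distinguishes $\mathop \bigcup \limits_\Phi$ from the discriminatory variants of Sections~\ref{sec:def1} and~\ref{sec:def2}, where fixing specific description values $\{i,j\}$ makes the filter genuinely non-trivial.
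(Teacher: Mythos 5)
Your proposal is correct and follows essentially the same route as the paper's own proof: both rest on the observation that any $x \in A \cup B$ automatically satisfies $\phi(x) \in \phi(A)$ or $\phi(x) \in \phi(B)$, so the descriptive condition in Definition~\ref{def:desun4} filters out nothing. You merely make explicit the two inclusions and the case split ($x \in A$ versus $x \in B$) that the paper leaves implicit, which is a harmless elaboration rather than a different argument.
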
 
\begin{proof}
	It can be seen that every $x \in A \cup B$ satisfies the condition that $\phi(x) \in \phi(A) \, or \, \phi(x) \in  \phi(B)$. Thus, def.~\ref{def:desun4} reduces to $A \cup B$, hence proved.
\end{proof}
We can restate this principle by saying that description of elements in $A \cup B$ are the same as $\phi(A)$ or $\phi(B)$.  Thus, a spatially unrestricted and descriptvely indiscriminant union is equivalent to the usual notion of union, that ignores the descriptions of elements.

\section{Applications}
In this section we will use the notion of descriptive union developed in section~\ref{sec:desunion}. We will try to emulate some concepts developed in literature using the set theoretic union. This is to start a study in to wealth of structures that descriptive unions have to offer. 
\subsection{Convex union representable complexes}
A collection of sets $X$, such that for a set $S \in X$ any subset $T \subset S$ is also in $X$, is called a \emph{simplicial complex}(here after referred to as a complex).
A \emph{nerve} complex is a collection of sets with a non-empty intersection.  The notion of a nerve complex was introduced by P. Alexandroff~\cite{Alexandroff1926MAnnNerfTheorem},\cite{Alexandroff1932elementaryConcepts}, elaborated by P. Alexandroff and H. Hopf~\cite{AlexandroffHopf1935Topologie} in the context of CW complexes and generalized by H. Edelsbrunner and J.L. Harer~\cite{Edelsbrunner1999}.     If a complex can be represented as a nerve of convex sets in $\mathbb{R}^d$, then it is \emph{d-representable}. An extension of this concept is defined in \cite{jeffs2018convex}. A \emph{d-convex union representable} complex is a nerve of finite collection of convex open sets in $\mathbb{R}^d$, with the additional property that the union of these sets is also convex.

Let us define this idea in the context of descriptive set theoretic operations. We begin by defining the idea of a descriptive nerve as follows:
\begin{definition}\label{def:nerve_des}
	Let $\mathcal{U}$ be a finite collection of sets, $\{U_1,\cdots,U_n\}$. Then, the descriptive nerve is defined as:
	\begin{align*}
	\mathop \mathcal{N} \limits_{\Phi}(\mathcal{U})=\{\{U_i\}_{i \subset \{1,\cdots,n\}}: \mathop \bigcap \limits_{\Phi} U_i \neq \emptyset \}
	\end{align*}
\end{definition}  
As we have four different types of descriptive unions we have four types of convex union representable complexes. Here, we use the symbol $\mathop \bigcup \limits_{des}$ to represent all of these in the definition of descriptive d-convex union representable complexes.
\begin{definition}\label{def:dconvexun_des}
	Let $K$ be a simplicial complex and $\mathcal\{U\}$ be a collection of convex open sets $\{U_1,\cdots,U_n\} \subset \mathbb{R}^d$. If $K=\mathop \mathcal{N} \limits_{\Phi}(\mathcal{U})$ and $\mathop \bigcup \limits_{des} U_i$ is also convex for all $i=1,\cdots,n$. Then, $K$ is a d-convex union representable complex.
\end{definition}  
In the following theorem we formulate some results regarding convexity of the descriptive union of convex sets. We formulate this theorem for the unions of two sets as in the current paper our focus is on union as a binary operation.
\begin{theorem}\label{thm:desun_convx}
	Let $A,B \subset K$ be two convex sets, let $\phi:2^K \rightarrow \mathbb{R}^n$ be a probe function and $\pi:\mathbb{R}^n \rightarrow 2^K$ be a map such that $\pi: x \mapsto \{y \in K: \phi(y)=x\}$. Then the following properties are true regarding the convexity of descriptive union:
	\begin{compactenum}[$1^o$]
		\item $\tilde{\pi}(i) \cup \tilde{\pi}(j) \text{ is convex} 
		\Leftrightarrow A \tilde{\mathop \bigcup \limits_{\phi=\{i,j\}}} B $ is convex, where  $\tilde{\pi}=\pi \rhd \{2^{A \cap B}\}$.
		\item $\hat{\pi}(i)  \cup \hat{\pi}(j) \text{ is convex} 
		\Leftrightarrow A \mathop \bigcup \limits_{\phi=\{i,j\}} B $ is convex, where $\hat{\pi}=\pi \rhd \{2^{A \cup B}\}$
		\item $A \tilde{\mathop \bigcup \limits_\Phi} B$ is convex
		\item $A \cup B \text{ is convex} \Leftrightarrow A \mathop \bigcup \limits_\Phi B$ is convex	
	\end{compactenum}
where $f \rhd D$ is restriction of the range of a function $f$ to $D$, a subset of the original range of the function.
\end{theorem}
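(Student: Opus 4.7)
The plan is to dispose of the four items essentially by unfolding definitions, since three of the four descriptive unions reduce to already-characterized objects thanks to earlier results. First I would fix the interpretation of the range-restriction notation by writing $\tilde\pi(x) = \pi(x) \cap (A \cap B)$ and $\hat\pi(x) = \pi(x) \cap (A \cup B)$, so that $\tilde\pi(i) = \{y \in A \cap B : \phi(y) = i\}$ and $\hat\pi(i) = \{y \in A \cup B : \phi(y) = i\}$, with analogous descriptions for $j$.

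For item $1^o$, I would unfold Definition~\ref{def:desun1}: an element of $A \tilde{\mathop \bigcup \limits_{\phi=\{i,j\}}} B$ is an $x \in A \cap B$ with $\phi(x) = i$ or $\phi(x) = j$, and this is precisely the set $\tilde\pi(i) \cup \tilde\pi(j)$. The two sides of the biconditional are therefore literally the same subset of $K$, so convexity of one is convexity of the other. Item $2^o$ goes through identically: replace $A \cap B$ by $A \cup B$ and use Definition~\ref{def:desun2} to identify $A \mathop \bigcup \limits_{\phi=\{i,j\}} B$ with $\hat\pi(i) \cup \hat\pi(j)$.

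For item $3^o$, I would appeal to Theorem~\ref{thm:equiv_inter}, which gives $A \tilde{\mathop \bigcup \limits_\Phi} B = A \cap B$ as sets. Since $A$ and $B$ are convex and the intersection of convex sets is convex, $A \tilde{\mathop \bigcup \limits_\Phi} B$ is unconditionally convex; no biconditional is needed here, which matches the way the statement is phrased. For item $4^o$, Theorem~\ref{thm:equiv_union} gives $A \mathop \bigcup \limits_\Phi B = A \cup B$ as sets, so each is convex exactly when the other is. The main obstacle, such as it is, is purely notational: one has to pin down the meaning of the range restriction $\pi \rhd \{2^{A \cap B}\}$, since the only interpretation that makes items $1^o$ and $2^o$ meaningful is the one adopted above, namely intersecting the image of $\pi$ with the ambient set under consideration. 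Once that convention is fixed, each item is an immediate consequence of either a definition-unfolding or one of Theorems~\ref{thm:equiv_inter} and~\ref{thm:equiv_union}, with the only extra set-theoretic input being the elementary fact that intersections of convex sets are convex.
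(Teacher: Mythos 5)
Your proposal is correct and follows essentially the same route as the paper: items $1^o$ and $2^o$ by identifying the descriptive union with $\tilde\pi(i)\cup\tilde\pi(j)$ (resp.\ $\hat\pi(i)\cup\hat\pi(j)$) so the biconditional is trivial, item $3^o$ via Theorem~\ref{thm:equiv_inter} plus convexity of intersections of convex sets, and item $4^o$ via Theorem~\ref{thm:equiv_union}. Your explicit pinning-down of the range-restriction notation is a small clarity improvement over the paper but not a different argument.
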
 
\begin{proof}
	\begin{compactenum}[$1^o$]
		\item  From Def.\ref{def:desun1} it can be concluded that $A \tilde{\mathop \bigcup \limits_{\phi=\{i,j\}}} B$ is equivalent to $x \in A \cap B$ such that $\phi(x)=\{i,j\}$, that is equivalent to $\tilde{\pi(i)} \cup \tilde{\pi(j)}$. $\tilde{\pi}=\pi \rhd A \cap B$, is the range restriction of $\pi$ to $2^{A \cap B}$. This is required as the range of $\pi$ is $2^K$. Once, we establish this equivalence we can conclude that if  $\tilde{\pi(i)} \cup \tilde{\pi(j)}$ is convex then $A \tilde{\mathop \bigcup \limits_{\phi=\{i,j\}}} B$ is also convex and vice aversa. Hence, we can prove the bijection. 
		
		\item It can be established from Def.~\ref{def:desun2} that $A \mathop \bigcup \limits_{\phi=\{i,j\}} B$ is equivalent to $x \in A \cup B$ such that $\phi(x) \in \{i,j\}$. This can be expressed as $\hat{\pi(i)} \cup \hat{\pi(j)}$, $\hat{\pi}=\pi \rhd 2^{A \cup B}$ is the restriction of range of $\pi$ to $2^{A \cup B}$. This is necessary as the range of $\pi$ is $2^K$. Once this equivalence is established, we can immediately conclude that if $\hat{\pi(i)} \cup \hat{\pi(j)}$ is convex then $A \mathop \bigcup \limits_{\phi=\{i,j\}} B$ and vice aversa, thus proving the bijection.     
		
		\item It can be established from Thm.~\ref{thm:equiv_inter} that $A \tilde{\mathop \bigcup \limits_\Phi} B$ is equivalent to $A \cap B$. The intersection of convex sets is convex due to a well established result. If the intersection contains a single point it is true by definition. In case of more than one points, consider any two $x$ and $y$. The definition of intersection constrains the points to lie in both $A$ and $B$. As both $A$ and $B$ are convex thus a line joining the points would lie in both the sets and hence in $A \cap B$. Thus, the intersection of convex sets is also convex. 
		
		\item It can be established from Thm.~\ref{thm:equiv_union} that $A \mathop \bigcup \limits_\Phi B$ is equivalent to $A \cup B$. Thus, if $A \cup B$ is convex this implies that $A \mathop \bigcup \limits_\Phi $ is convex and vice versa. Hence, proving the bijection. 
		
		\end{compactenum}
\end{proof}
\subsection{Descriptive unions in digital images }
In this section we consider the application of different descriptive unions to digital images. As mentioned earlier in section \ref{sec:desint}, descriptive set theory started with the notion of an intersection, motivated by the fact that sets in digital images can have similar color eventhough they may be spatially far. Following is a brief account regarding each of the different unions defined in section \ref{sec:desunion}. We use one of the stock images used in MATLAB, called \emph{peppers.png}, showin in Fig.~\ref{subfig:orig}. This image has objects with different intensities, which for the purpose of this study we take to be the description. Let us write this down in mathematical notation, consistent with the one used in this study.

Let $\mathcal{I}$ be the spatial region over which the image is defined, then $2^\mathcal{I}$ be the set of all the possible regions in it. This spatial region is usually a subset of the Euclidean plane $\mathbb{R}^2$, but different regions can be used if the application dictates so. To put this in the context of digital images, the smallest region in an image that can have a distinct description is a \emph{pixel}. In digital photography a sensor array measures the intensity of light falling on it via the lens. Each sensor measures three values namely \emph{red,blue and green} light intensities for each pixel. This act of capturing an image can be represented as a map, $\phi:2^\mathcal{I} \rightarrow \mathbb{R}^3$. Thus, a digital image is a \emph{glossa} as defined in section \ref{sec:intro}, and can be written as $\mathcal{I}_\Phi$. A fibre bundle structure, $(\mathcal{I}_\Phi,\mathcal{I},\pi,\phi(U))$, can be used to specify the relationship between image, spatial region over which it is defined and the sensors used to acquire it. Here, $\mathcal{I}_\Phi$ is the digital image, $\mathcal{I}$ is the region over which the image is defined, $U \subset \mathcal{I}$, $\phi:2^\mathcal{I} \rightarrow \mathbb{R}^3$ models the imaging sensors and $\pi:\mathcal{I}_\Phi \rightarrow \mathcal{I}$ is a continuous surjection. This structure satisfies Eqns.~\ref{eq:fibre} and \ref{eq:localtriv}.

Moreover, in digital images there are a lot of artifacts that effect intensity values captured by the sensors. These include but are not limited to quantization noise, uneven illumination and electronic noise etc. To take these into account we use a \emph{tolerance} value when equating intensity values. Instead of $\|\phi(A)-\phi(B)\|_2=0 \Rightarrow \phi(A) \simeq \phi(B)$, we consider $\|\phi(A)-\phi(B)\|_2 \leq \eta \Rightarrow \phi(A) \simeq \phi(B)$. Here, $\eta$ is an arbitrary positive number and $A,B \subset \mathcal{I}$.

\begin{figure}
	\centering
	\begin{subfigure}[Original image]
		{\includegraphics[width=2in]{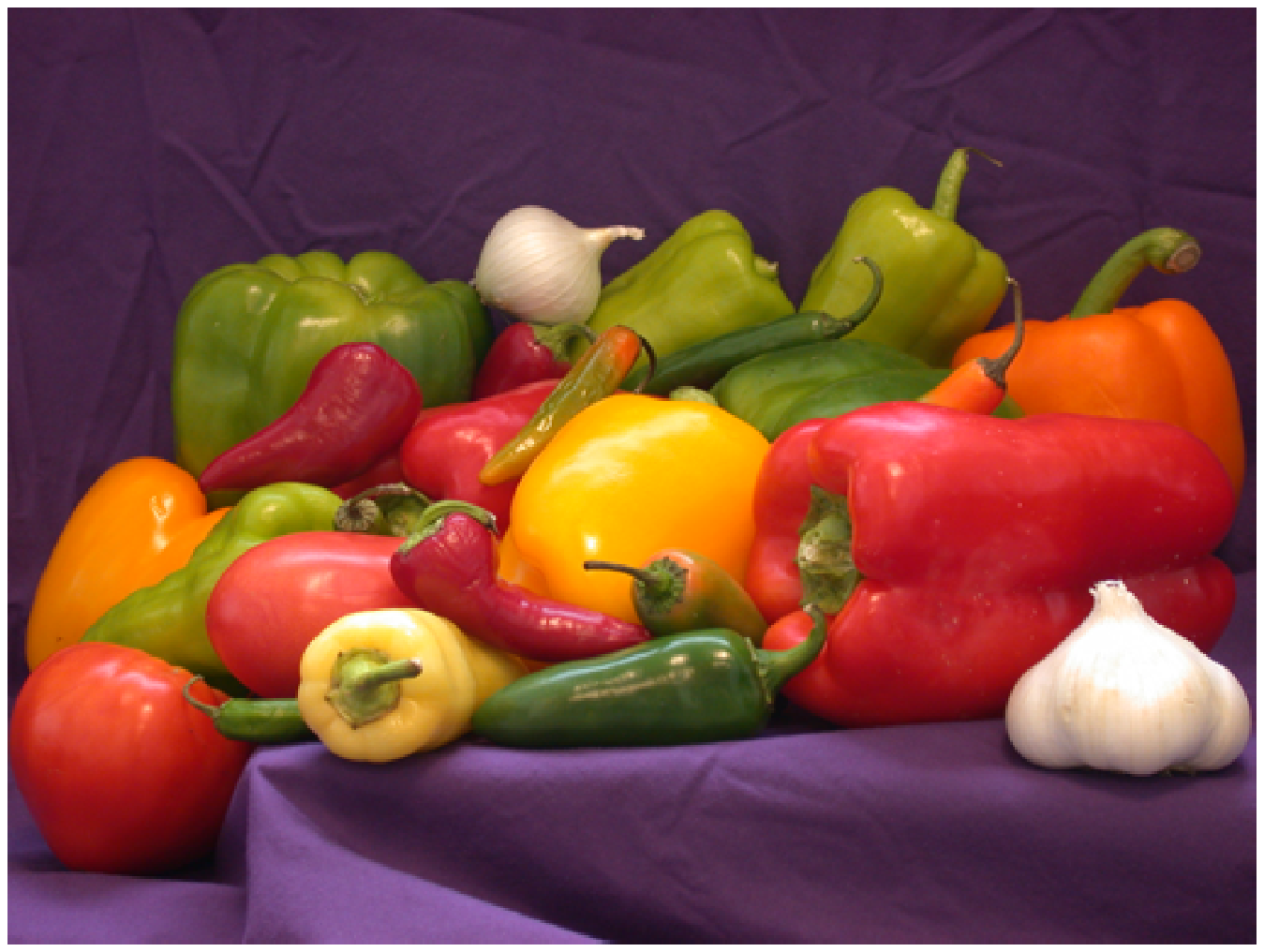}
			\label{subfig:orig}}
	\end{subfigure}
	\begin{subfigure}[Two subsets in the image]
		{\includegraphics[width=2in]{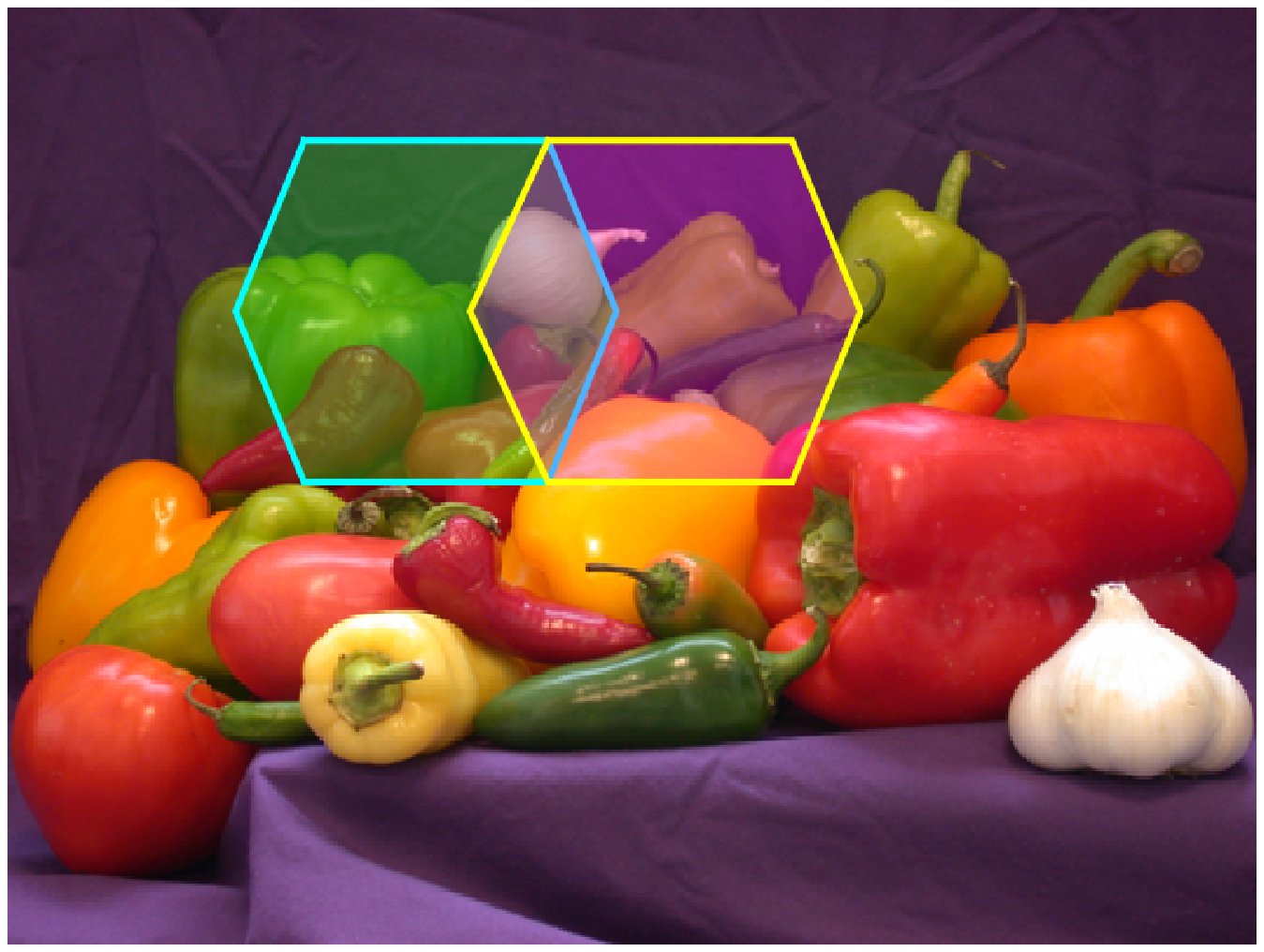}
			\label{subfig:neigh}}
	\end{subfigure}
	\begin{subfigure}[$\tilde{\mathop \bigcup \limits_{\eta,\phi=\{[254, 224, 198],[208, 35, 37]\}}}$]
		{\includegraphics[width=2in]{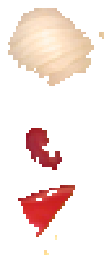}
			\label{subfig:desun1}}
	\end{subfigure}
	\begin{subfigure}[$\mathop \bigcup \limits_{\eta,\phi=\{[254, 224, 198],[208, 35, 37]\}}$]
		{\includegraphics[width=2in]{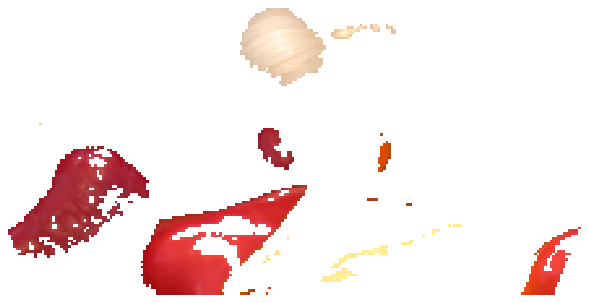}
			\label{subfig:desun2}}
	\end{subfigure}
	\begin{subfigure}[$\mathop \bigcup \limits_{\eta,\Phi}$]
		{\includegraphics[width=2in]{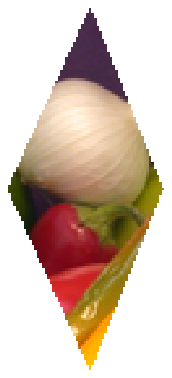}
			\label{subfig:desun3}}
	\end{subfigure}
	\begin{subfigure}[$\tilde{\mathop \bigcup \limits_{\eta,\Phi}}$]
		{\includegraphics[width=2in]{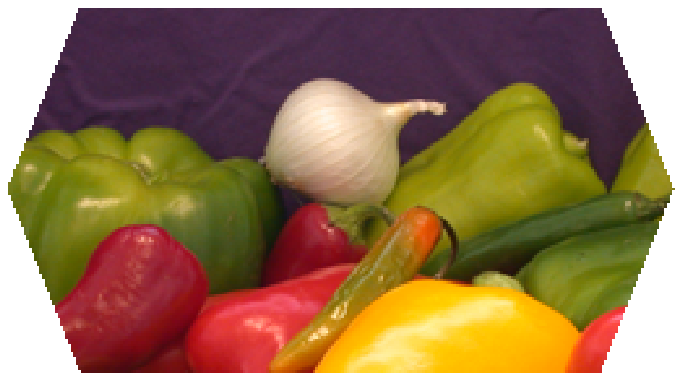}
			\label{subfig:desun4}}
	\end{subfigure}
	
	\caption{This figure illustrates how the different descriptive unions defined in section \ref{sec:desunion}, work in the case of a digital image presented in Fig.~\ref{subfig:orig}. The hexagonal  intersecting subsets, that we select are shown in Fig.~\ref{subfig:neigh}. These are colored green and red for the purpose of identification. The values of parameters for this experiment are $\eta=60$ and $\phi=\{[254,224,198],[208,35,37]\}$. Fig.~\ref{subfig:desun1} is the illustration of restrictive and descriptive discriminatory union. Fig.~\ref{subfig:desun2} shows the non-restrictive and descriptive discriminatory union. Fig.~\ref{subfig:desun3} is the restrictve and descriptive non-discriminatory union. Fig.~\ref{subfig:desun4} is the non-restrictive and descriptive non-discriminatory union.}
\end{figure}

\subsubsection{Restrictive and descriptive discriminatory union} 
~\\
Let us present a modified version of the original definition(def.~\ref{def:desun1}) to incorporate tolerance.
\begin{definition}\label{def:desun1tol}
	Let $A \subset K$ be a subset in $K$,$\phi:2^K \rightarrow \mathbb{R}^n$ be a probe function and $\eta$ be an arbitrary positive number. Then
	\begin{align*}
	A \tilde{\mathop \bigcup \limits_{\eta,\phi=\{i,j\}}} B = \{x \in A \cap B: \|\phi(x)-i\|_2 \leq \eta \, or \,\|\phi(x)-j\|_2 \leq \eta \}, 
	\end{align*}
	where $\tilde{\mathop \bigcup \limits_{\eta, \phi=\{i,j\}}}$ is the spatially restricted and descriptively discriminant union with a tolerance.
\end{definition}

Now we move on to the application of this concept to the digital image shown in Fig.~\ref{subfig:orig}. We select two subsets with a non-empty intersection as shown in Fig.~\ref{subfig:neigh}, one of the sets is coloured green and the other one red for identification. From def.~\ref{def:desun1tol}, it can be established that we are restricted to the intersection (a diamond shaped region) of regions under consideration. For this experiment we set $\eta= 60$ and $\phi=\{[254, 224, 198],[208, 35, 37]\}$. Thus, $A \tilde{\mathop \bigcup \limits_{\eta,\phi=\{i,j\}}} B$ is the set of pixels in $A \cup B$ within the tolerance range of $[254, 224, 198]$ or $[208, 35, 37]$. This is shown in Fig.~\ref{subfig:desun1}.

\subsubsection{Non-restrictive and descriptive discriminatory union}
~\\
We start by presenting a version of def.~\ref{def:desun2} that incorporates tolerance.
\begin{definition}\label{def:desun2tol}
	Let $A,B \subset K$ be subsets in $K$, $\phi:2^K \rightarrow \mathbb{R}^n$ be a probe function and $\eta$ be an arbitrary positive number. Then
	\begin{align*}
	A \mathop \bigcup \limits_{\eta,\phi=\{i,j\}} B = \{x \in A \cup B: \|\phi(x)-i\|_2 \leq \eta \, or \,\|\phi(x)-j\|_2 \leq \eta\}, 
	\end{align*}
	where $\mathop \bigcup \limits_{\eta,\phi=\{i,j\}}$ is the spatially unrestricted and descriptively discriminant union with a tolerance.
\end{definition}
Similar to the previous experiment we set $\eta=60$ and $\phi=\{[254,224,198],[208,35,37]\}$. We select two subsets $A,B$ as shown in Fig.~\ref{subfig:neigh}. Def.~\ref{def:desun2tol} states that $A \mathop \bigcup \limits_{\eta,\phi=\{i,j\}} B$ lies within $A \cup B$ and consists of the pixels with intensity in the tolerance range of either $[254,224,198]$ or $[208,35,37]$. We can see this illustrated in Fig.~\ref{subfig:desun2}.

\subsubsection{Restrictive and descriptive nondiscriminatory union}
~\\
A version of def.~\ref{def:desun3} adjusted to account for tolerance is as follows.
\begin{definition}\label{def:desun3tol}
	Let $A,B \subset K$ be subsets of $K$, $\phi:2^K \rightarrow \mathbb{R}^n$ and $\eta$ be an arbitrary positive number. Then,
	\begin{align*}
	A \tilde{\mathop \bigcup \limits_{\eta,\Phi}} B=\{x \in A \cap B: \|\phi(s)-\phi(A)\|_2 \leq \eta\, or \, \|\phi(s)-\phi(B)\|_2 \leq \eta\},
	\end{align*}
	where $\tilde{\mathop \bigcup \limits_{\eta,\Phi}} $ is the spatially restricted and descriptively indiscriminant union with a tolerance.
\end{definition} 
The two subsets under consderation are presented in Fig.~\ref{subfig:neigh}. It can be seen that $A \tilde{\mathop \bigcup \limits_{\eta,\Phi}} B$ is restricted to $A \cap B$. It consists of all the pixels in $A \cap B$ that have the description matching any of the pixels in $A$ or $B$. A result stating the equivelence of this union to the $A \cap B$  was theorized for the case without tolerance in Thm.~\ref{thm:equiv_inter}. We can reformulate this for the case in which we consider tolerance as follows.
\begin{theorem}\label{thm:equiv_intertol}
	Let $A,B \subset K$ be the two sets in $K$, $\phi:2^K \rightarrow \mathbb{R}^n$ be a probe function and $\eta$ be an arbitrary positive number. Then,
	\begin{align*}
	A \tilde{\mathop \bigcup \limits_{\eta,\Phi}} B \Leftrightarrow A \bigcap B .
	\end{align*}
\end{theorem}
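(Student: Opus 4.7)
The plan is to mimic the proof of Theorem~\ref{thm:equiv_inter} and establish the set equality $A \tilde{\mathop \bigcup \limits_{\eta,\Phi}} B = A \cap B$ by two inclusions. The first inclusion is immediate from Def.~\ref{def:desun3tol}: every element of the descriptive union lies in $A \cap B$ by construction, so $A \tilde{\mathop \bigcup \limits_{\eta,\Phi}} B \subseteq A \cap B$ without any use of the tolerance condition.

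For the reverse inclusion, I would take an arbitrary $x \in A \cap B$ and verify that the tolerance disjunction $\|\phi(x)-\phi(A)\|_2 \leq \eta$ or $\|\phi(x)-\phi(B)\|_2 \leq \eta$ is automatically satisfied. Interpreting $\|\phi(x)-\phi(A)\|_2$ in the usual point-to-set sense as $\inf_{y \in \phi(A)} \|\phi(x)-y\|_2$, the fact that $x \in A$ forces $\phi(x) \in \phi(A)$, so the infimum is realized by $y = \phi(x)$ and equals $0 \leq \eta$. The same reasoning applies with $B$ in place of $A$, so both disjuncts hold and $x \in A \tilde{\mathop \bigcup \limits_{\eta,\Phi}} B$.

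I do not expect a genuine obstacle. The tolerance $\eta$ only relaxes the membership condition relative to the crisp version in Def.~\ref{def:desun3}, and since the crisp condition is already satisfied by any element of $A \cap B$ (as used in Thm.~\ref{thm:equiv_inter}), the tolerance condition holds a fortiori. The only point worth flagging in the write-up is the interpretation of $\|\phi(x)-\phi(A)\|_2$ when $\phi(A)$ is a set rather than a single vector; once the point-to-set convention is fixed, the argument reduces to the observation that $\phi(x)$ itself witnesses membership in $\phi(A)$ (and in $\phi(B)$) at distance zero, making the parameter $\eta$ inessential to this particular theorem.
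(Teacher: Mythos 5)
Your proposal is correct and follows essentially the same route as the paper's own proof: both arguments observe that any $x \in A \cap B$ has $\phi(x) \in \phi(A)$ and $\phi(x) \in \phi(B)$, hence distance zero to each description set, and that $0 \leq \eta$ makes the tolerance disjunction hold automatically, so the definition collapses to $A \cap B$. Your explicit two-inclusion structure and your remark fixing the point-to-set interpretation of $\|\phi(x)-\phi(A)\|_2$ are welcome clarifications, but they do not change the substance of the argument.
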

\begin{proof}
	It can be seen that every $x \in A \cap B$ satisfies the condition that $\phi(x) \in \phi(A) \, and \, \phi(x) \in  \phi(B)$. These can be rewritten as $\|\phi(x)-\phi(A)\|_2 = 0$ and $\|\phi(x)-\phi(B)\|_2 =0$. These conditions are more restrictive than the ones required for $A \tilde{\mathop \bigcup \limits_{\eta,\Phi}} B$ namely, $\|\phi(x)-\phi(A)\|_2 \leq eta$ and $\|\phi(x)-\phi(B)\|_2  \leq \eta$, as $\eta>0$. Thus, any element $x$ satisfying $\|\phi(x)-\phi(A)\|_2 = 0$ and $\|\phi(x)-\phi(B)\|_2 =0$, satifies $\|\phi(x)-\phi(A)\|_2 \leq \eta$ and $\|\phi(x)-\phi(B)\|_2  \leq \eta$. Thus, Def.~\ref{def:desun3tol} reduces to $A \cap B$, hence proved.
\end{proof}
We can see the theorem in action when studying this particular union for digital images. The parameters $\eta=60$ and $\phi={[254,224,198],[208,35,37]}$ are set as in the previous cases. The sets selected are shown in Fig.~\ref{subfig:neigh}. We can see from Thm.~\ref{thm:equiv_intertol} that $A \tilde{\mathop \bigcup \limits_{\eta,\Phi}} B$ is the same as $A \cap B$, as shown in Fig.~\ref{subfig:desun3}. 
\subsubsection{Non-restrictive and descriptive nondiscriminatory union}
~\\
Modification of def.~\ref{def:desun4} to include tolerance is given below.
\begin{definition}\label{def:desun4tol}
	Let $A,B \subset K$ be two subsets of $K$, $\phi:2^K \rightarrow \mathbb{R}^n$ be the probe function and $\eta$ be an arbitrary positive number. Then 
	\begin{align*}
	A \mathop \bigcup \limits_{\eta,\Phi} B=\{x\in A \cup B: \|\phi(s)-\phi(A)\|_2 \leq \eta\, or \, \|\phi(s)-\phi(B)\|_2 \leq \eta \}
	\end{align*}
	where $\mathop \bigcup \limits_{\eta,\Phi}$ is the spatially unrestricted and descriptively indiscriminant union with a tolerance.
\end{definition}
 The parameter values are set for this experiment to $\eta=60$ and $\phi=\{[254,224,198], \linebreak[0] [208,35,37]\}$ and the subsets used are the same as those shown in Fig.~\ref{subfig:neigh}. We can see this in def.~\ref{def:desun4tol}, that $A \mathop \bigcup \limits_{\eta,\Phi} B$ are the pixels inside of $A \cup B$ with intensities in the tolerance range of $[254,224,198]$ or $[208,35,37]$. Thm.~\ref{thm:equiv_union} states that this union is equivalent to $A \cup B$ for the case without tolerance. We can formulate a similar result for the case with tolerance.
 \begin{theorem}\label{thm:equiv_uniontol}
 	Let $A,B \subset K$ be the two sets in $K$, $\phi:2^K \rightarrow \mathbb{R}^n$ be a probe function, and $\eta$ is an arbitrary positive number. Then,
 	\begin{align*}
 	A \mathop \bigcup \limits_{\eta, \Phi} B \Leftrightarrow A \bigcup B.
 	\end{align*}
 \end{theorem}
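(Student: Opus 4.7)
The plan is to proceed in close parallel with the proof of Theorem~\ref{thm:equiv_intertol}, exploiting the observation that the tolerance inequalities appearing in Definition~\ref{def:desun4tol} are strictly weaker than the membership conditions $\phi(x)\in\phi(A)$ or $\phi(x)\in\phi(B)$, and that the latter are automatically satisfied by every element of $A\cup B$.

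First I would fix the convention that $\|\phi(x)-\phi(A)\|_2$ denotes the distance from the point $\phi(x)$ to the set $\phi(A)$, so that $\phi(x)\in\phi(A)$ forces $\|\phi(x)-\phi(A)\|_2=0$; this matches the usage in the preceding theorem. Next I would establish the two inclusions. The inclusion $A\mathop{\bigcup}\limits_{\eta,\Phi}B\subseteq A\cup B$ is immediate from Definition~\ref{def:desun4tol}, since the candidate elements are already restricted to $A\cup B$. For the reverse inclusion, let $x\in A\cup B$. Then $x\in A$ or $x\in B$; in the first case $\phi(x)\in\phi(A)$ gives $\|\phi(x)-\phi(A)\|_2=0\le\eta$, and in the second case $\phi(x)\in\phi(B)$ gives $\|\phi(x)-\phi(B)\|_2=0\le\eta$. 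Either way the disjunctive condition of Definition~\ref{def:desun4tol} holds, so $x\in A\mathop{\bigcup}\limits_{\eta,\Phi}B$.

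Combining the two inclusions yields $A\mathop{\bigcup}\limits_{\eta,\Phi}B=A\cup B$, which is the claimed biconditional. I do not expect any genuine obstacle: the only mild subtlety is the interpretation of $\|\phi(x)-\phi(A)\|_2$ as a point-to-set distance, and the observation that the hypothesis $\eta>0$ is not actually used, because the tolerance bound is already attained with equality at zero. Structurally the argument mirrors Theorem~\ref{thm:equiv_intertol} almost verbatim, with $A\cap B$ replaced throughout by $A\cup B$ and ``and'' replaced by ``or''.
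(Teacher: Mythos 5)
Your proposal is correct and follows essentially the same route as the paper's proof: both arguments rest on the observation that every $x\in A\cup B$ satisfies $\phi(x)\in\phi(A)$ or $\phi(x)\in\phi(B)$, hence the corresponding distance is $0\le\eta$, so the tolerance condition in Definition~\ref{def:desun4tol} is automatic and the descriptive union collapses to $A\cup B$. Your explicit two-inclusion framing and the remark that $\eta>0$ is never actually needed are minor refinements of the same argument, not a different approach.
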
 
 \begin{proof}
 	It can be seen that every $x \in A \cup B$ satisfies the condition that $\phi(x) \in \phi(A) \, or \, \phi(x) \in  \phi(B)$. These statements are equivalent to $\|\phi(x)-\phi(A)\|_2 =0 \, or \, \|\phi(x)-\phi(B)\|_2=0$. The conditions for $x\ in A \cup B$ to be in $A \mathop \bigcup \limits_{\eta, \Phi} B$ are $\|\phi(x)-\phi(A)\|_2 \leq \eta \, or \, \|\phi(x)-\phi(B)\|_2 \leq \eta$. It is straigh forward that $\|\phi(x)-\phi(A)\|_2 =0 \Rightarrow \|\phi(x)-\phi(A)\|_2 \leq \eta $ and $\|\phi(x)-\phi(B)\|_2 =0 \Rightarrow \|\phi(x)-\phi(B)\|_2 \leq \eta $. Thus, def.~\ref{def:desun4tol} reduces to $A \cup B$, hence proved.
 \end{proof}
 We can see this equivalence illustrated for the case of digital images in Fig.~\ref{subfig:desun4}
 
 \section{Conclusion}
This study extends the notion of descriptive intersection and presents it in the newly defined framework of descriptive unions called glossa.  We use fibre bundles to explain the relationship between a set and the description of its elements. We propose four different notions of descriptive union and determine their properties. Moreover, we consider the application of these unions in \emph{d-convex union representable complexes} and in digital images.  
\bibliographystyle{amsplain}
\bibliography{desrefs}
\end{document}